\newtheorem{prop}{Proposition}
\newcommand{\Mscr}{{\cal M}}
\newcommand{\qed}{\nobreak\hfill\nobreak\quad\nobreak\rule{5pt}{7pt}}
\newenvironment{proof}{\noindent {\bf Proof}\quad}{\qed}
\newdimen\PLyy\newdimen\PLX
\newbox\PLdot \setbox\PLdot\hbox{\tiny.} 
\def\scl{.06} 
\def\PLot#1{\PLx`#1\advance\PLx-42\PLy\PLx\PLv\PLx\divide\PLy9\PLw\PLy\multiply
\PLw9\advance\PLx-\PLw\advance\PLx-4\PLy-\PLy\advance\PLy4\PLX=\the\PLx pt
\advance\PLyy\the\PLy pt\wd\PLdot=\scl\PLX\raise\scl\PLyy\copy\PLdot}
\def\draw#1{\ifx#1\end\let\next=\relax\else\PLot#1\let\next=\draw\fi\next}
\def\invamp{\hbox{\PLyy=70pt\draw :::;DMV_gqppyyyyyooooxxxnnwvlutkjaWNE=5-./9
9:::CCCC:::99/..--544=EENWWaajjjkktttttttNNNVVVVVVVV\end \hskip4pt}}
\newbox\iabox\setbox\iabox\invamp \def\Invamp{\copy\iabox}
\newcommand{\lpar}{\mathrel{\Invamp}}
\newcommand{\limp}{\mathbin{-\hspace{-0.70mm}\circ}}
\newcommand{\bla}{ \mathrel{\mbox{$\circ\!-$}}}
\newcommand{\lequiv}{\mathbin{\circ\hspace{-1.30mm}-\hspace{-1.30mm}\circ}}
\newcommand{\with}{\mathbin{\&}}
\newcommand{\bang}{\mathop{!}}
\newcommand{\llitem}[1]{\hbox{\em item}~#1}
\newcommand{\mssub}{\sqsubseteq}
\newcommand{\mseq }{\mathrel{\buildrel m\over =}}
\newcommand{\msmodels}{\models_m}
\newcommand{\Rho}{\mathrel{\rho}}
\newcommand{\hatRho}{\mathrel{\hat\rho}}
\def\relbar{\mathrel{\smash-}}
\def\joinrelm{\mathrel{\mkern-3mu}}
\def\tailpiece{\kern 1pt\vrule height 1ex width 0.3ex depth -.3ex}
\def\seqsym{\mathrel{\tailpiece\joinrelm\relbar}}
\newcommand{\tseqsym}{\mathrel{\kern3pt\seqsym\kern-10pt\seqsym}}
\newcommand{\twoseq}[2]{#1 \seqsym #2}
\newcommand{\quest}{\mathop{?}}
\newcommand{\ot}{\otimes}
\newcommand{\iimp}{\Rightarrow}
\newcommand{\zero}{{\bf 0}}
\newcommand{\ra}{\rightarrow}
\newcommand{\Tok}[1]{\hbox{\tt #1}}
\newcommand{\Bottom}{\bot}
\newcommand{\One}{{\bf 1}}
\newcommand{\Allx}[2]{\forall #1.#2}
\newcommand{\With}[2]{#1\\ #2}
\newcommand{\Impliedby}[2]{#2\iimp #1}
\newcommand{\Llpar}[2]{#1\lpar #2}
\newcommand{\Llequiv}[2]{(#1\lequiv #2)}
\newcommand{\Llitem}[1]{\hbox{\em item}~#1}
\begin{document}
\conferenceinfo{PPDP'06} {July 10--12, 2006, Venice, Italy.}
\copyrightyear{2006}
\copyrightdata{1-59593-388-3/06/0007} 

\preprintfooter{Collection Analysis for Horn Clause Programs} 

\title{Collection Analysis for Horn Clause Programs} 
\subtitle{[Extended Abstract]}

\authorinfo{Dale Miller}
           {INRIA \& LIX, \`Ecole Polytechnique, Rue de Saclay\\
            91128 Palaiseau, France}
           {dale.miller [at] inria.fr}

\maketitle
\begin{abstract}
We consider approximating data structures with collections of the
items that they contain.  For examples, lists, binary trees, tuples,
etc, can be approximated by sets or multisets of the items within
them.  Such approximations can be used to provide partial correctness
properties of logic programs.  For example, one might wish to specify
than whenever the atom $sort(t,s)$ is proved then the two lists $t$
and $s$ contain the same multiset of items (that is, $s$ is a
permutation of $t$).  If sorting removes duplicates, then one would
like to infer that the sets of items underlying $t$ and $s$ are the
same.  Such results could be useful to have if they can be determined
statically and automatically.  We present a scheme by which such
collection analysis can be structured and automated.  Central to this
scheme is the use of linear logic as a computational logic underlying
the logic of Horn clauses.
\end{abstract}

\category{F.4.1}{Mathematical Logic}{Computational logic}
\category{I.2.3}{Deduction and Theorem Proving}{Logic programming}

\terms{Design, Theory, Verification} 
            
\keywords{proof search, static analysis, Horn clauses, linear logic}

\section{Introduction}

Static analysis of logic programs can provide useful information for
programmers and compilers.  Typing systems, such as in $\lambda$Prolog
\cite{nadathur88iclp,nadathur92types}, have proved valuable during the
development of code: type errors often represent program errors that
are caught at compile time when they are easier to find and fix than
at runtime when they are much harder to repair.  Static type
information also provides valuable documentation of code since it
provides a concise approximation to what the code does.

In this paper we describe a method by which it is possible to infer
that certain relationships concerning collections underlying
structured data hold.  We shall focus on relations that are also
decidable and can be done during compile time analysis of logic
programs.   We shall use {\em multisets} and {\em sets} to {\em
approximate} more complicated structures as lists and binary trees.
Consider, for example, a list sorting program that maintains
duplicates of elements.  Part of the correctness of a sort program
includes the fact that if the atomic formula $sort(t,s)$ is provable,
then $s$ is a permutation of $t$ that is in-order.  The proof of such
a property is likely to involve inductive arguments requiring the
invention of invariants: in other words, this is not likely to be a
property that can be inferred statically during compile time.  On the
other hand, if the lists $t$ and $s$ are approximated by multisets
(that is, if we forget the order of items in lists), then it might be
possible to establish that if the atomic formula $sort(t,s)$ is provable,
then the multiset associated to $s$ is equal to the multiset associated
to $t$.  If that is so, then it is immediate that the lists $t$ and
$s$ are, in fact, permutations of one another (in other words, no
elements were dropped, duplicated, or created during sorting).  
As we shall see, such properties based on using multisets to
approximate lists  can often be done statically. 

This paper considers exclusively the static analysis of first-order
Horn clauses but it does so by making substitution instances of such
Horn clauses that carry them into linear logic.  Proofs for the
resulting linear logic formulas are then attempted as part of static
analysis.

\section{The undercurrents}\label{undercurrents}

There are various themes that underlie our approach to inferring
properties of Horn clause programs.  We list them explicitly below.
The rest of the paper can be seen as a particular example of how these
themes can be developed.

\subsection{If typing is important, why use only one?}

Types and other static properties of programming languages have proved
important on a number of levels.  Typing can be useful for
programmers: they can offer important invariants and document for
code.  Static analysis can also be used by compilers to uncover useful
structures that allow compilers to make choices that can improve
execution.  While compilers might make use of multiple static analysis
regimes, programmers do not usually have convenient access to multiple
static analyzes for the code that they are composing.  Sometimes, a
programming language provides no static analysis, as is the case with
Lisp and Prolog.  Other programming languages offer exactly one typing
discipline, such as the polymorphic typing disciplines of Standard ML
and $\lambda$Prolog (SML also statically determines if a given
function defined over concrete data structures cover all possible
input values).  It seems clear, however, that such analysis of
code, if it can be done quickly and incrementally, might have
significant benefits for programmers during the process of writing code.
For example, a programmer might find it valuable to know that a
recursive program that she has just written has linear or quadratic
runtime complexity,
or that a relation she just specified actually defines a function.
The Ciao system preprocessor \cite{hermenegildo05scp} provides for
such functionality by allowing a programmer to write various
properties about code that the preprocessor attempts to verify.
Having an open set of properties and analysis tools is an
interesting direction for the design of a programming language.  The
collection analysis we discuss here could be just one such analysis
tool.

\subsection{Logic programs as untyped $\lambda$-expressions}

If we do not commit to just one typing discipline, then it seems
sensible to use a completely untyped setting for encoding programs and
declarations.  Given that untyped $\lambda$-terms provide for
arbitrary applications and arbitrary abstractions, such terms can
provide an appealing setting for the encoding of program expressions,
type expressions, assertions, invariants, etc.  Via the well
developed theory of $\lambda$-conversion, such abstractions can be
instantiated with a variety of other objects.  Abstractions can be
used to encode quantifiers within formulas as well as binding
declarations surrounding entire programs.

In logic programming, proofs can be viewed as computation traces and
such proof objects can also be encoded as untyped $\lambda$-terms.
Instantiations into proofs is also well understood since it is closely
related to the elimination of cut in sequent calculus or to
normalization in natural deduction proofs.  The fact that proofs and
programs can be related simply in a setting where substitution into
both has well understood properties is certainly one of the strengths
of the proof theoretic foundations of logic programming (see, for
example, \cite{miller91apal}).

\subsection{What good are atomic formulas?}

In proof theory, there is interesting problem of duality involving
atomic formulas.   The {\em initial rule} and the {\em cut rule}
given as
$$\frac{}{\twoseq{C}{C}}
  \hbox{\ Initial} \qquad
  \frac{\twoseq{\Gamma_1}{C,\Delta_1}\quad
        \twoseq{\Gamma_2,C}{\Delta_2}}{
       \twoseq{\Gamma_1,\Gamma_2}{\Delta_1,\Delta_2}}
  \hbox{\ Cut}$$
can be seen as being dual to each other \cite{girard89book}.  In
particular, the initial rule states that an occurrence of a formula on
the left is stronger than the same occurrence on the right, whereas
the cut rule states the dual: an occurrence of a formula on the
right is strong enough to remove the same occurrence from the left.
In most well designed proof systems, all occurrence of the cut-rule
can be eliminated (whether or not $C$ is an atomic formula) whereas
only non-atomic initial rules (where $C$ is non-atomic) can be
eliminated.  Atoms seem to spoil the elegant duality of the
meta-theory of these inference rules.

While the logic programming world is most comfortable with the
existence of atomic formulas, there have been a couple of recent proof
theoretic approaches that try to eliminate them entirely.  For example, in
the work on {\em definitions} and {\em fixed points} by
Schroeder-Heister \cite{schroeder-Heister93lics}, Girard
\cite{girard92mail}, and McDowell \& Miller \cite{mcdowell00tcs},  
atoms are defined to be other formulas.  In this approach, the only
primitive judgment involving terms is that of equality.  In that
setting, if definitions are {\em stratified} (no recursion through
negations) and {\em noetherian} (no infinite descent in recursion),
then all instances of cut and initial can be removed.  The setting of
{\em ludics} of Girard \cite{girard01mscs} is a more radical
presentation of logic in which atomic formulas do not exist: formulas
can be probed to arbitrary depth to uncover ``subformulas''.

Another approach to atoms is to consider {\em all} constants as
being variables.  On one hand this is a trivial position: if there are
no constants (thus, no predicate constants) there are no atomic formulas
(which are defined as formulas with non-logical constants at their
head).  On the other hand, adopting a point-of-view that constants can
vary has some appeal.  We describe this next.

\subsection{Viewing constants and variables as one}

The inference rule of $\forall$-generalization states that if $B$ is
provable then $\forall x. B$ is provable (with appropriate provisos if
the proof of $B$ depends on hypotheses).  If we are in a first-order
logic, then the free first-order variable $x$ of $B$ becomes bound in
$\forall x. B$ by this inference rule.

Observe the following two things about this rule.  First, if
we are in an untyped setting, then we can, in principle, quantify over
any variable in any expression, even those that play the role of
predicates or functions.  Mixing such rich abstractions with logic is
well known to be inconsistent so when we propose such rich
abstractions in logic, we must accompany it with some discipline (such
as typing) that will yield consistency.

Second, we need to observe that differences between constants and
variables can be seen as one of ``scope'', at least from a syntactic,
proof theoretic, and computational point of view.  For example,
variables are intended as syntactic objects that can ``vary''.  During
the computation of, say, the relation of appending lists, universal
quantified variables surrounding Horn clauses change via substitution
(via backchaining and unification) but the constructors for the empty
and non-empty lists as well as the symbol denoting the append relation
do not change and, hence, can be seen as constants.  But from a
compiling and linking point-of-view, the append predicate might be
considered something that varies: if append is in a module of Prolog
that is separately compiled, the append symbol might denote a
particular object in the compiled code that is later changed when the
code is loaded and linked.  In a similar fashion, we shall allow
ourselves to instantiate constants with expression during static
analysis.

Substituting for constants allows us to ``split the atom'': that is,
by substituting for the predicate $p$ in the atom $p(t_1,\ldots,t_n)$,
we replace that atom with a formula, which, in this paper, will be a
linear logic formula.

\subsection{Linear logic underlies computational logic}

Linear logic \cite{girard87tcs} is able to explain the proof theory of
usual Horn clause logic programming (and even richer logic programming
languages \cite{hodas94ic}).  It is also able to provide means to
reason about resources, such as items in multisets and sets.  Thus,
linear logic will allow us to sit within one declarative framework to
describe both usual logic programming as well as ``sub-atomic''
reasoning about the resources implicit in the arguments of
predicates.

\section{A primer for linear logic}\label{linear}

Linear logic connectives can be divided into the following groups: the
multiplicatives $\lpar$, $\bot$, $\ot$, $\One$; the additives
$\oplus$, $\zero$, $\with$, $\top$; the exponentials $\bang$,
$\quest$; the implications $\limp$ (where $B\limp C$ is defined as
$B^\perp\lpar C$) and $\iimp$ (where $B\iimp C$ is defined as $(\bang
B)^\perp\lpar C$); and the quantifiers $\forall$ and $\exists$
(higher-order quantification is allowed).  The equivalence of formulas
in linear logic,  $B\lequiv C$, is defined as the formula
$(B\limp C)\with(C\limp B)$.

First-order Horn clauses can be described as formulas of the form
$$\forall x_1\ldots\forall x_m [A_1\land\ldots\land A_n\supset A_0]
   \qquad (n,m\ge 0)$$
where $\land$ and $\supset$ are intuitionistic or classical logic
conjunction and implication.  There are at least two natural mappings
of Horn clauses into linear logic.  The ``multiplicative'' mapping
uses the $\ot$ and $\limp$ for the conjunction and implication: this
encoding is used in, say, the linear logic programming settings, such
as Lolli \cite{hodas94ic}, where Horn clause programming can interact
with the surrounding linear aspects of the full programming language.  
Here, we are not interested in linear logic programming per se but
with using linear logic to help establish invariants about Horn
clauses when these are interpreted in the usual, classical
setting.  As a result, we shall encode Horn clauses into linear logic
using the conjunction $\with$ and implication $\iimp$: that is, we
take Horn clauses to be formulas of the form 
$$\forall x_1\ldots\forall x_m [A_1\with\ldots\with A_n\iimp A_0].
   \qquad (n,m\ge 0)$$
The usual proof search behavior of first-order Horn clauses in
classical (and intuitionistic) logic is captured precisely when 
this style of linear logic encoding is used.

\bigbreak

\begin{figure*}
$$\begin{array}{c}
\With{\Allx{\Tok{K}}{(\Tok{append}~\Tok{nil}~\Tok{K}~\Tok{K})}}{\With{\Allx{\Tok{X}}{\Allx{\Tok{L}}{\Allx{\Tok{K}}{\Allx{\Tok{M}}{\Impliedby{(\Tok{append}~(\Tok{cons}~\Tok{X}~\Tok{L})~\Tok{K}~(\Tok{cons}~\Tok{X}~\Tok{M}))}{(\Tok{append}~\Tok{L}~\Tok{K}~\Tok{M})}}}}}}{\With{\Allx{\Tok{X}}{(\Tok{split}~\Tok{X}~\Tok{nil}~\Tok{nil}~\Tok{nil})}}{\With{\Allx{\Tok{X}}{\Allx{\Tok{A}}{\Allx{\Tok{B}}{\Allx{\Tok{R}}{\Allx{\Tok{S}}{\Impliedby{(\Tok{split}~\Tok{X}~(\Tok{cons}~\Tok{A}~\Tok{R})~(\Tok{cons}~\Tok{A}~\Tok{S})~\Tok{B})}{(\Tok{leq}~\Tok{A}~\Tok{X})\&(\Tok{split}~\Tok{X}~\Tok{R}~\Tok{S}~\Tok{B})}}}}}}}{\With{\Allx{\Tok{X}}{\Allx{\Tok{A}}{\Allx{\Tok{B}}{\Allx{\Tok{R}}{\Allx{\Tok{S}}{\Impliedby{(\Tok{split}~\Tok{X}~(\Tok{cons}~\Tok{A}~\Tok{R})~\Tok{S}~(\Tok{cons}~\Tok{A}~\Tok{B}))}{(\Tok{gr}~\Tok{A}~\Tok{X})\&(\Tok{split}~\Tok{X}~\Tok{R}~\Tok{S}~\Tok{B})}}}}}}}{\With{(\Tok{sort}~\Tok{nil}~\Tok{nil})}{\Allx{\Tok{F}}{\Allx{\Tok{R}}{\Allx{\Tok{S}}{\Allx{\Tok{Sm}}{\Allx{\Tok{B}}{\Allx{\Tok{SS}}{\Allx{\Tok{BS}}{\Impliedby{(\Tok{sort}~(\Tok{cons}~\Tok{F}~\Tok{R})~\Tok{S})}{(\Tok{split}~\Tok{F}~\Tok{R}~\Tok{Sm}~\Tok{B})\&(\Tok{sort}~\Tok{Sm}~\Tok{SS})\&(\Tok{sort}~\Tok{B}~\Tok{BS})\&(\Tok{append}~\Tok{SS}~(\Tok{cons}~\Tok{F}~\Tok{BS})~\Tok{S})}}}}}}}}}}}}}}
\end{array}
$$
\caption{Some Horn clauses for specifying a sorting relation.}
\label{one}
$$\begin{array}{c}
\With{\Allx{\Tok{K}}{\Llequiv{\Llpar{\Bottom}{\Tok{K}}}{\Tok{K}}}}{\With{\Allx{\Tok{X}}{\Allx{\Tok{L}}{\Allx{\Tok{K}}{\Allx{\Tok{M}}{\Impliedby{\Llequiv{\Llpar{\Llpar{\Llitem{\Tok{X}}}{\Tok{L}}}{\Tok{K}}}{\Llpar{\Llitem{\Tok{X}}}{\Tok{M}}}}{\Llequiv{\Llpar{\Tok{L}}{\Tok{K}}}{\Tok{M}}}}}}}}{\With{\Allx{\Tok{X}}{\Llequiv{\Llpar{\Bottom}{\Bottom}}{\Bottom}}}{\With{\Allx{\Tok{X}}{\Allx{\Tok{A}}{\Allx{\Tok{B}}{\Allx{\Tok{R}}{\Allx{\Tok{S}}{\Impliedby{\Impliedby{\Llequiv{\Llpar{\Llpar{\Llitem{\Tok{A}}}{\Tok{S}}}{\Tok{B}}}{\Llpar{\Llitem{\Tok{A}}}{\Tok{R}}}}{\One}}{\Llequiv{\Llpar{\Tok{S}}{\Tok{B}}}{\Tok{R}}}}}}}}}{\With{\Allx{\Tok{X}}{\Allx{\Tok{A}}{\Allx{\Tok{B}}{\Allx{\Tok{R}}{\Allx{\Tok{S}}{\Impliedby{\Impliedby{\Llequiv{\Llpar{\Tok{S}}{\Llpar{\Llitem{\Tok{A}}}{\Tok{B}}}}{\Llpar{\Llitem{\Tok{A}}}{\Tok{R}}}}{\One}}{\Llequiv{\Llpar{\Tok{S}}{\Tok{B}}}{\Tok{R}}}}}}}}}{\With{\Llequiv{\Bottom}{\Bottom}}{\Allx{\Tok{F}}{\Allx{\Tok{R}}{\Allx{\Tok{S}}{\Allx{\Tok{Sm}}{\Allx{\Tok{Bg}}{\Allx{\Tok{SS}}{\Allx{\Tok{BS}}{\Impliedby{\Llequiv{\Llpar{\Llitem{\Tok{F}}}{\Tok{R}}}{\Tok{S}}}
{{\Llequiv{\Llpar{\Tok{Sm}}{\Tok{B}}}{\Tok{R}}}\&
 {\Llequiv{\Tok{Sm}}{\Tok{SS}}}\&
 {\Llequiv{\Tok{B}}{\Tok{BS}}}\&
 {\Llequiv{\Llpar{\Tok{SS}}{\Llpar{\Llitem{\Tok{F}}}{\Tok{BS}}}}{\Tok{S}}}
}}}}}}}}}}}}}}
\end{array}
$$
\caption{The result of instantiating various non-logical constants in
  the above Horn clauses.}
\label{two}
\end{figure*}

\section{A primer for proof theory}\label{proof}
\label{proof theory}

A sequent is a triple of the form
$\Sigma\colon\twoseq{\Gamma}{\Delta}$ were $\Sigma$, the signature, is
a list of non-logical constants and eigenvariables paired with a
simple type, and where 
both $\Gamma$ and $\Delta$ are multisets of $\Sigma$-formulas (i.e.,
formulas all of whose non-logical symbols are in $\Sigma$).
The rules for linear logic are the standard ones \cite{girard87tcs},
except here signatures 
have been added to sequents.  The rules for quantifier introduction
are the only rules that require the signature and they are reproduced
here:
$$\frac{\Sigma,y\colon\tau;\twoseq{B[y/x],\Gamma}{\Delta}}
       {\Sigma ; \twoseq{\exists
       x^\tau.B,\Gamma}{\Delta}} \exists L \quad
  \frac{\Sigma\vdash t\colon \tau \quad \Sigma ;\twoseq{\Gamma}{B[t/x],\Delta}}
       {\Sigma;\twoseq{\Gamma}{\exists x^\tau.B,\Delta}} \exists R
$$
$$\frac{\Sigma\vdash t\colon\tau \quad \Sigma;\twoseq{B[t/x],\Gamma}{\Delta}}
       {\Sigma ;\twoseq{\forall x^\tau.B,\Gamma}{\Delta}} \forall L \quad
  \frac{\Sigma,y:\tau;\twoseq{\Gamma}{B[y/x],\Delta}}
       {\Sigma;\twoseq{\Gamma}{\forall x^\tau.B,\Delta}} \forall R
$$
The premise $\Sigma\vdash t\colon\tau$ is the judgment that
the term $t$ has the (simple) type $\tau$ given the typing declaration
contained in $\Sigma$.

We now outline three ways to instantiate things within the sequent
calculus.

\subsection{Substituting for types}
\label{subtyp}

Although we think of formulas and proofs as untyped expressions, we
shall use simple typing within sequents to control the kind of
formulas that are present.  A signature is used to bind and declare
typing for (eigen)variables and non-logical constants within a
sequent.  Simple types are, formally speaking, also a simple class of
untyped $\lambda$-terms: the type $o$ is used to denote formulas
(following Church \cite{church40}).  In a sequent calculus proof,
simple type expressions are global and admit no bindings.  As a
result, it is an easy matter to show that if one takes a proof with a
type constant $\sigma$ and replaces everywhere $\sigma$ with some
type, say, $\tau$, one gets another valid proof.  We shall do this
later when we replace a list by a multiset that approximates it: since
we are using linear logic, we shall use formulas to encode multisets
and so we shall replace the type constant {\tt list} with {\tt o}.

\subsection{Substituting for non-logical constants}
\label{subcon}

Consider the sequent
$$\Sigma,p\colon\tau ;\twoseq{\bang D_1,\bang D_2,\bang\Gamma}{p(t_1,\ldots,t_m)}$$
where the type $\tau$ is a predicate type (that is, it is of the form
$\tau_1\ra\cdots\ra\tau_m\ra o$) and where $p$ appears in, say, $D_1$
and $D_2$ and in no formula of $\Gamma$.  The linear logic exponential
$\bang$ is used here to encode the fact that the formulas $D_1$ and
$D_2$ are available for arbitrary reuse within a proof (the usual
case for program clauses).  Using the right introduction
rules for implication and the universal quantifier, it follows that
the sequent 
$$ \Sigma ; \twoseq{\bang\Gamma}{\forall p 
                            [D_1 \iimp D_2\iimp p(t_1,\ldots,t_m)]}$$
is also provable.  Since this is a universal quantifier, there must be
proofs for all instances of this quantifier.  Let $\theta$ be the
substitution $[p\mapsto\lambda x_1\ldots\lambda x_m. S]$, where $S$ is
a term over the signature $\Sigma\cup\{x_1,\ldots,x_m\}$ of type $o$.  A 
consequence of the proof theory of linear logic is that there is a
proof also of  
$$\Sigma;\twoseq{\bang\Gamma}{D_1\theta \iimp D_2\theta\iimp 
  S[t_1/x_1,\ldots,t_m/x_m]}
$$
and of the sequent
$$\Sigma;\twoseq{\bang D_1\theta,\bang D_2\theta, \bang\Gamma}
   {S[t_1/x_1,\ldots,t_m/x_m]}.$$
As this example illustrates, it is possible to instantiate a predicate
(here $p$) with an abstraction of a formula (here, $\lambda
x_1\ldots\lambda x_m.\ S$).  Such instantiation carries a provable
sequent to a provable sequent.

\subsection{Substituting for assumptions}
\label{subass}

An instance of the cut-rule (mentioned earlier) is the following: 
$$\frac{\Sigma ; \twoseq{\Gamma_1}{B}\qquad 
        \Sigma ; \twoseq{B,\Gamma_2}{C}}
       {\Sigma ; \twoseq{\Gamma_1,\Gamma_2}{C}}
$$
This inference rule (especially when associated with the
cut-elimination procedure) provides a way to merge (substitution) the
proof of a formula (here, $B$) with a use of that formula as an
assumption.   For example, consider the following situation.
Given the example in the Section~\ref{subcon}, assume that we can prove 
$$ \Sigma ; \twoseq{\bang\Gamma}{\bang D_1\theta}
   \hbox{\quad and \quad}
   \Sigma ; \twoseq{\bang\Gamma}{\bang D_2\theta}.
$$
Using two instances of the cut rule and the proofs of these sequent,
it is possible to obtain a proof of the sequent 
$$\Sigma;\twoseq{\bang\Gamma}{S[t_1/x_1,\ldots,t_m/x_m]}$$
(contraction on the left for $\bang$'ed formulas must be applied). 

Thus, by a series of instantiations of proofs, it is possible to move
from a proof of, say, 
$$\Sigma, p\colon \tau ; \twoseq{\bang D_1, \bang D_2, \bang \Gamma}
   {p(t_1,\ldots,t_m)}
$$
to a proof of 
$$\Sigma;\twoseq{\bang\Gamma}{S[t_1/x_1,\ldots,t_m/x_m]}.$$
We shall see this style of reasoning about proofs several times below.
This allows us to ``split an atom'' $p(t_1,\ldots,t_m)$ into a formula
$S[t_1/x_1,\ldots,t_m/x_m]$ and to transform proofs of the atom into
proofs of that formula.  In what follows, the formula $S$ will be a
linear logic formula that provides an encoding of some judgment about
the data structures encoded in the terms $t_1,\ldots, t_m$.

A few simple examples of using higher-order instantiations of logic
programs in order to help reasoning about them appear in
\cite{miller02amast}.

\section{Encoding multisets as formulas}\label{msets}

We wish to encode multisets and sets and simple judgments about them
(such as inclusion and equality) as linear logic formulas.  We
consider multisets first.   Let token
{\sl item} be a linear logic predicate of one argument: the 
linear logic atomic formula $\llitem{x}$ will denote the multiset
containing just the one element $x$ occurring once.
There are two natural encoding of multisets into formulas using this
predicate.  The {\em conjunctive} encoding uses $\One$ for the empty
multiset and $\ot$ to combine two multisets.  For example, the
multiset $\{ 1,2,2\}$ is encoded by the linear logic formula
$\llitem{1}\ot\llitem{2}\ot\llitem{2}$.  Proofs search using this
style encoding places multiset on the left of the sequent arrow.  This
approach is favored when an intuitionistic subset of linear logic
is used, such as in Lolli \cite{hodas94ic}, LinearLF
\cite{cervesato96lics}, and MSR \cite{cervesato99csfw}.  The dual
encoding, the {\em disjunctive} encoding, uses $\bot$ for the empty
multiset and $\lpar$ to combine two multisets.  Proofs search using
this style encoding places multisets on the right of the sequent
arrow.  Multiple conclusion sequents are now required.  Systems such
as LO \cite{andreoli91ngc} and Forum \cite{miller96tcs} use this style
of encoding.  If negation is available, then the choice of which
encoding one chooses is mostly a matter of style.  We pick the
disjunctive encoding for the rather shallow reason that the inclusion
judgment for multisets and sets is encoded as an implication instead
of a reverse implication, as we shall now see.

Let $S$ and $T$ be the two formulas $\llitem
s_1\lpar\cdots\lpar\llitem s_n$ and $\llitem
t_1\lpar\cdots\lpar\llitem t_m$, respectively ($n,m\ge 0$).  Notice
that $\vdash S\limp T$ if and only if $\vdash T\limp S$ if and only if
the two multisets $\{s_1,\ldots, s_n\}$ and $\{t_1,\ldots, t_m\}$ are
equal.  Consider now, however, the following two ways for encoding the
multiset inclusion $S\mssub T$.
\begin{itemize}
\item $S\lpar 0\limp T$.  This formula mixes multiplicative
  connectives with the additive connective $0$: the latter allows
  items that are not matched between $S$ and $T$ to be deleted.

\item $\exists q (S\lpar q\limp T)$.  This formula mixes
  multiplicative connectives with a higher-order quantifier.  While we
  can consider the instantiation for $q$ to be the multiset difference
  of $S$ from $T$, there is no easy way in the logic to enforce that
  interpretation of the quantifier.
\end{itemize}
As it turns out, these two approaches are equivalent in linear logic:
in particular, $\vdash\zero\lequiv\forall p. p$ (linear logic absurdity) and 
$$\vdash\forall S\forall T [(S\lpar 0\limp T)\lequiv\exists q (S\lpar q\limp T)].$$
Thus, below we can choose either one of these encodings for 
multiset inclusion.

\section{Multisets approximations}\label{multisets}

A {\em multiset expression} is a formula in linear logic built from
the predicate symbol {\sl item} (denoting the singleton multiset),
the linear logic multiplicative disjunction $\lpar$ (for multiset
union), and the unit $\bot$ for $\lpar$ (used to denote the empty
multiset).  We shall also allow a predicate variable (a variable of
type $o$) to be used to denote a (necessarily open) multiset
expression.  An example of an open multiset expression is
$\llitem{f(X)}\lpar\bot\lpar Y$, where $Y$ is a variable of type $o$,
$X$ is a first-order variable, and $f$ is some first-order term
constructor.

Let $S$ and $T$ be two multiset expressions.  The two {\em multiset
judgments} that we wish to capture are multiset inclusion, written as
$S\mssub T$, and equality, written as $S\mseq T$.  We shall use the
syntactic variable $\rho$ to range over these two judgments, which are
formally binary relations of type $o\ra o\ra o$.
A {\em multiset statement} is a formula of the form
$$\forall\bar x[S_1\Rho_1 T_1\with\cdots\with S_n\Rho_n T_n 
           \iimp S_0\Rho_0 T_0 ]$$
where the quantified variables $\bar x$ are either first-order or of
type $o$ and formulas $S_0,T_0,\ldots,S_n,T_n$ are possibly open
multiset expressions.  

If $S$ and $T$ are closed multiset expressions, then we
write $\msmodels S\mssub T$ whenever the multiset (of closed
first-order terms) denoted by $S$ is contained in the multiset denoted
by $T$, and we write $\msmodels S\mseq T$ whenever the multisets
denoted by $S$ and $T$ are equal.   Similarly, we write
$$\msmodels\forall\bar x[S_1\Rho_1 T_1\with\cdots\with 
                         S_n\Rho_n T_n\iimp S_0\Rho_0 T_0 ]
$$ 
if for all closed substitutions $\theta$ such that $\msmodels
S_i\theta\Rho_i T_i\theta$ for all $i=1,\ldots,n$, it is the case that
$\msmodels S_0\theta\Rho_0 T_0\theta$.

The following Proposition is central to our use of linear logic to
establish multiset statements for Horn clause programs.

\begin{prop}\label{ms ok}
Let $S_0, T_0, \ldots, S_n, T_n$ ($n\ge0$) be multiset expressions all
of whose free variables are in the list of variables $\bar x$.  For
each judgment $s\Rho t$ we write $s\hatRho t$ to denote $\exists
q(s\lpar q\limp t)$ if $\Rho$ is $\mssub$ and $t \lequiv s$ if $\Rho$
is $\mseq$.  If 
$$\forall \bar x [
   S_1\hatRho_1 T_1 \with\ldots\with
   S_n\hatRho_n T_n \iimp S_0\hatRho_0 T_0 ]
$$
is provable in linear logic, then 
$$\models_{ms}
  \forall\bar x[S_1\Rho_1 T_1\with\cdots\with S_n\Rho_n T_n 
                \iimp S_0\Rho_0 T_0 ]$$
\end{prop}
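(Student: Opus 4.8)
The plan is to reduce the proposition to an adequacy lemma relating the two worlds involved: for any \emph{closed} multiset expressions $s$ and $t$, we have $\msmodels s\Rho t$ if and only if $\vdash s\hatRho t$ in linear logic. Granting this lemma, the proposition follows by a routine proof-theoretic manipulation. First I would fix a closed substitution $\theta$ (mapping each type-$o$ variable to a closed multiset expression) for which $\msmodels S_i\theta\Rho_i T_i\theta$ holds for every $i=1,\ldots,n$; the goal is $\msmodels S_0\theta\Rho_0 T_0\theta$. The easy (``only if'') direction of the lemma turns each hypothesis into a proof of $\vdash S_i\theta\hatRho_i T_i\theta$. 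Since the displayed implication is provable and provability is closed under instantiation of the outermost $\forall\bar x$, its $\theta$-instance $S_1\theta\hatRho_1 T_1\theta\with\cdots\with S_n\theta\hatRho_n T_n\theta\iimp S_0\theta\hatRho_0 T_0\theta$ is provable as well. Combining the $n$ proofs of the hat-hypotheses with the $\with$-right rule yields a proof of their $\with$-conjunction, and $\bang$-promotion (the context being empty) gives a proof of its $\bang$. Unfolding $B\iimp C$ as $\bang B\limp C$ and cutting against this proof produces $\vdash S_0\theta\hatRho_0 T_0\theta$, whereupon the hard (``if'') direction of the lemma delivers $\msmodels S_0\theta\Rho_0 T_0\theta$, as required.

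What remains is the adequacy lemma, and here the two judgments are treated separately. For equality, $\hatRho$ is $t\lequiv s$, that is $(t\limp s)\with(s\limp t)$; by the observation recorded earlier in Section~\ref{msets}, $\vdash s\limp t$ holds exactly when the two closed multisets coincide, so provability of this equivalence is equivalent to $\msmodels s\mseq t$, settling both directions at once. For inclusion, the ``only if'' direction is constructive: if the multiset denoted by $s$ is contained in that denoted by $t$, let $d$ be the difference multiset and let $q$ be a multiset expression encoding $d$; then $s\lpar q$ and $t$ denote the same multiset, so $\vdash s\lpar q\limp t$ by the equality fact, and one $\exists$-introduction gives $\vdash\exists q(s\lpar q\limp t)$.

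The crux is the ``if'' direction for inclusion, which is where I expect the real work. By cut-elimination I may assume a cut-free proof of $\vdash\exists q(s\lpar q\limp t)$; its existential introduction exhibits a closed witness formula $Q$ of type $o$ together with a cut-free proof of $\vdash s\lpar Q\limp t$. Rewriting this one-sidedly, the goal becomes a sequent whose right-hand multiset contains the atoms $\llitem{t_1},\ldots,\llitem{t_l}$, the remaining formulas arising from decomposing $s^\perp\ot Q^\perp$. The key point is that \emph{item} is an uninterpreted predicate carrying no logical structure, so each atom $\llitem{s_i}$ coming from $s^\perp$ can be discharged only by the initial rule, against some $\llitem{t_j}$ with $s_i=t_j$, and distinct occurrences must consume distinct $t_j$'s. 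This forces an injection matching the multiset of $s$ into the multiset of $t$, i.e.\ $\msmodels s\mssub t$, while the leftover $t_j$'s are absorbed harmlessly by $Q^\perp$. The main obstacle is precisely the control of the higher-order witness $Q$: one must argue that, whatever formula $Q$ is chosen, it can neither create nor destroy \emph{item} atoms, so that the pairing of the $\llitem{s_i}$ with the $\llitem{t_j}$ is unavoidable. The subformula property furnished by cut-elimination, together with the atomicity of \emph{item}, is what makes this argument go through.
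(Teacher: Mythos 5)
The paper states Proposition~\ref{ms ok} without proof (it is an extended abstract), so there is no official argument to measure you against; judged on its own terms, your proof is correct and is the natural one. The reduction of the proposition to the closed-instance adequacy lemma is exactly the substitution-into-proofs machinery the paper sets up in Section~\ref{proof theory} (instantiate $\forall\bar x$, assemble the hypotheses with $\with$R and $\bang$-promotion, cut), and your case analysis of the lemma is sound. You correctly identify the only nontrivial point, namely the ``if'' direction for $\mssub$, and your argument there does go through; let me just record the cleanest way to close it, since your appeal to ``the subformula property'' is vaguer than necessary. After extracting the witness $Q$ from the $\exists$R inference, the one-sided sequent is $\oneseq{(s^\perp\ot Q^\perp),\llitem{t_1},\ldots,\llitem{t_m}}$ (once the $\lpar$'s of $t$ are decomposed), and the only rule that can be applied is $\ot$ on the compound formula, which \emph{multiplicatively partitions} the atoms $\llitem{t_j}$ among the factors $\llitem{s_1}^\perp,\ldots,\llitem{s_l}^\perp,Q^\perp$. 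The subproof above each $\llitem{s_i}^\perp$ therefore never contains $Q$ at all: it is a sequent consisting of one negated atom and some positive atoms, which is provable only by the initial rule, forcing its context to be a single matching $\llitem{t_j}$. Linearity of the splitting makes the resulting assignment injective on occurrences, which is precisely $\msmodels s\mssub t$; there is no need to reason about what $Q$ can or cannot ``create,'' only that it is walled off by the tensor. With that step made explicit, the proof is complete (the degenerate case $n=0$ and the presence of $\bot$/$\One$ units in the expressions are routine).
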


This Proposition shows that linear logic can be used in a sound way to
infer valid multiset statement.  On the other hand,  the converse
(completeness) does not hold: the statement 
$$\forall x\forall y. (x\mssub y)\with (y\mssub x)\iimp (x \mseq y)$$
is valid but its translation into linear logic is not provable.

\begin{figure*}
$$\begin{array}{c}
\Allx{\Tok{X}}{(\Tok{split}~\Tok{X}~\Tok{nil}~\Tok{nil}~\Tok{nil})}\\
\Allx{\Tok{X}}{\Allx{\Tok{B}}{\Allx{\Tok{R}}{\Allx{\Tok{S}}{\Impliedby{(\Tok{split}~\Tok{X}~(\Tok{cons}~\Tok{X}~\Tok{R})~\Tok{S}~\Tok{B})}{(\Tok{split}~\Tok{X}~\Tok{R}~\Tok{S}~\Tok{B})}}}}}
\\
\Allx{\Tok{X}}{\Allx{\Tok{A}}{\Allx{\Tok{B}}{\Allx{\Tok{R}}{\Allx{\Tok{S}}{\Impliedby{(\Tok{split}~\Tok{X}~(\Tok{cons}~\Tok{A}~\Tok{R})~(\Tok{cons}~\Tok{A}~\Tok{S})~\Tok{B})}{(\Tok{lt}~\Tok{A}~\Tok{X})\&(\Tok{split}~\Tok{X}~\Tok{R}~\Tok{S}~\Tok{B})}}}}}}
\\
\Allx{\Tok{X}}{\Allx{\Tok{A}}{\Allx{\Tok{B}}{\Allx{\Tok{R}}{\Allx{\Tok{S}}{\Impliedby{(\Tok{split}~\Tok{X}~(\Tok{cons}~\Tok{A}~\Tok{R})~\Tok{S}~(\Tok{cons}~\Tok{A}~\Tok{B}))}{(\Tok{gr}~\Tok{A}~\Tok{X})\&(\Tok{split}~\Tok{X}~\Tok{R}~\Tok{S}~\Tok{B})}}}}}}
\end{array}
$$
\caption{A change in the specification of splitting lists to drop
         duplicates.} 
\label{three}
\newcommand{\Splitinv}[4]{(\quest #2\limp\quest(\Llitem{#1}\oplus#3\oplus#4))}
$$\begin{array}{c}
\Allx{\Tok{X}}{\Splitinv{\Tok{X}}{\zero}{\zero}{\zero}}\\
\Allx{\Tok{X}}{\Allx{\Tok{B}}{\Allx{\Tok{R}}{\Allx{\Tok{S}}{
\Impliedby
   {\Splitinv{\Tok{X}}{(\Llitem{\Tok{X}}\oplus\Tok{R})}{\Tok{S}}{\Tok{B}}}
   {\Splitinv{\Tok{X}}{\Tok{R}}{\Tok{S}}{\Tok{B}}}
  }}}}\\
\Allx{\Tok{X}}{\Allx{\Tok{A}}{\Allx{\Tok{B}}{\Allx{\Tok{R}}{\Allx{\Tok{S}}{
   \Impliedby
            {\Splitinv{\Tok{X}}{(\Llitem{\Tok{A}}\oplus\Tok{R})}{\Llitem{\Tok{A}}\oplus\Tok{S}}{\Tok{B}}}
   { \One\&\Splitinv{\Tok{X}}{\Tok{R}}{\Tok{S}}{\Tok{B}}}
}}}}}
\\
\Allx{\Tok{X}}{\Allx{\Tok{A}}{\Allx{\Tok{B}}{\Allx{\Tok{R}}{\Allx{\Tok{S}}{
   \Impliedby
            {\Splitinv{\Tok{X}}{(\Llitem{\Tok{A}}\oplus\Tok{R})}{\Tok{S}}{\Llitem{\Tok{A}}\oplus\Tok{B}}}
   { \One\&\Splitinv{\Tok{X}}{\Tok{R}}{\Tok{S}}{\Tok{B}}}
}}}}}
\end{array}
$$
\caption{The result of substituting set approximations into the {\tt
    split} program.}
\end{figure*}

To illustrate how deduction in linear logic can be used to establish
the validity of a multiset statement, consider the first-order Horn
clause program in Figure~\ref{one}.  The signature for this collection
of clauses can be given as follows:
\begin{verbatim}
nil    : list
cons   : int -> list -> list
append : list -> list -> list -> o
split  : int -> list -> list -> list -> o
sort   : list -> list -> o
leq    : int -> int -> o
gr     : int -> int -> o
\end{verbatim}
The first two declarations provide constructors for empty and non-empty
lists, the next three are predicates whose  Horn clause definition
is presented in Figure~\ref{one}, and the last two are order relations
that are apparently defined elsewhere.  

If we think of lists as collections of items, then we might want to
check that the sort program as written does not drop, duplicate, or
create any elements.
That is, if the atom $(\Tok{sort}~s~t)$ is provable then the multiset
of items in the list denoted by $s$ is equal to the multiset of items
in the list denoted by $t$.  If this property holds then $t$ and $s$
are lists that are permutations of each other: of course, this does
not say that it is the correct permutation but this more simple fact
is one that, as we show, can be inferred automatically.

Computing this property of our example logic programming follows the
following three steps.

First, we provide an approximation of lists as being, in fact,
multiset: more precisely, as {\em formulas} denoting multisets.  The
first step, therefore, must be to substitute \Tok{o} for \Tok{list} in
the signature above.  Now we can now interpret the constructors for
lists using the substitution
$$\Tok{nil}\mapsto \bot \qquad \Tok{cons}\mapsto \lambda x\lambda
y.\ \Llitem{x}\lpar y.$$ 
Under such a mapping, the list (\Tok{cons}~{\Tok 1}~(\Tok{cons}~{\Tok
3}~(\Tok{cons}~{\Tok 2}~\Tok{nil}))) is mapped to the multiset
expression $\Llitem{1}\lpar\Llitem{3}\lpar\Llitem{2}\lpar\bot$.

Second, we associate with each predicate in Figure~\ref{one} a
multiset judgment that encodes an invariant concerning the multisets
denoted by the predicate's arguments.  For example, if
$(\Tok{append}~r~s~t)$ or $(\Tok{split}~u~t~r~s)$ is provable then the
multiset union of the items in $r$ with those in $s$ is equal to the
multiset of items in $t$, and  if $(\Tok{sort}~s~t)$ is provable then the
multisets of items in lists $s$ and $t$ are equal.  This association
of multiset judgments to atomic formulas can be achieved formally using
the following substitutions for constants:
$$\halign{\qquad\hfill $# $ & $\mapsto #$\hfill\cr
  \Tok{append} & \lambda x\lambda y\lambda z.\ (x\lpar y)\lequiv z\cr
   \Tok{split} & \lambda u\lambda x\lambda y\lambda z.\ (y\lpar z)\lequiv x\cr
   \Tok{sort}  & \lambda x\lambda y.\ x\lequiv y\cr
}$$
The predicates \Tok{leq} and \Tok{gr} (for the least-than-or-equal-to
and greater-than relations) make no statement about collections of
items, so that they can be mapped to a trivial tautology via the
  substitution 
$$\Tok{leq}\mapsto\lambda x\lambda y.\ \One \qquad
  \Tok{gr }\mapsto\lambda x\lambda y.\ \One
$$
Figure~\ref{two} presents the result of applying these mappings to
Figure~\ref{one}.

Third, we must now attempt to prove each of the resulting formulas.
In the case of Figure~\ref{two}, all the displayed formulas are
trivial theorems of linear logic.

Having taken these three steps, we now claim that we have proved the
intended collection judgments associate to each of the logic
programming predicates above: in particular, we have now shown that
our particular sort program computes a permutation.

\section{Formalizing the method}\label{formalizing}

The formal correctness of this three stage approach is easily
justified given the substitution properties we presented in
Section~\ref{proof theory} for the sequent calculus presentation of
linear logic.

Let $\Gamma$ denote a set of formulas that contains those in
Figure~\ref{one}.  Let $\theta$ denote the substitution described
above for the type {\tt list}, for the constructors {\tt nil} and {\tt
cons}, and for the predicates in Figure~\ref{one}.  If $\Sigma$ is the
signature for $\Gamma$ then split $\Sigma$ into the two signatures
$\Sigma_1$ and $\Sigma_2$ so that $\Sigma_1$ is the domain of the
substitution $\theta$ and let $\Sigma_3$ be the signature of the range
of $\theta$ (in this case, it just contains the constant
{\sl item}).
Thus, $\Gamma\theta$ is the set of formula in Figure~\ref{two}.  

Assume now that $\Sigma_1,\Sigma_2;\twoseq{\Gamma}{sort(t,s)}$ is
provable.  Given the discussion in Sections~\ref{subtyp}
and~\ref{subcon}, we know that
$$\Sigma_1,\Sigma_3;\twoseq{\Gamma\theta}{t\theta\lequiv s\theta}$$ is
provable.  Since the formulas in $\Gamma\theta$ are provable, we can
use substitution into proofs (Section~\ref{subass}) to conclude that
$\Sigma_1,\Sigma_3;\twoseq{}{t\theta\lequiv s\theta}$.  Given
Proposition~\ref{ms ok}, we can conclude that $\msmodels t\theta\mseq
s\theta$: that is, that $t\theta$ and $s\theta$
encode the same multiset.

Consider the following model theoretic argument for establishing
similar properties of Horn clauses.  Let $\Mscr$ be
the Herbrand model
that captures the invariants that we have in mind.  In
particular, $\Mscr$ contains the atoms $({\tt append}~r~s~t)$
and $({\tt split}~u~t~r~s)$ if the items in the list $r$ added to the
items in list $s$ are the same as the items in $t$.  Furthermore,
$\Mscr$ contains all closed atoms of the form $({\tt leq}~t~s)$ and
$({\tt gr}~t~s)$, and closed atoms $({\tt sort}~s~t)$ where $s$ and $t$
are lists that are permutations of one another.  One can now show
that $\Mscr$ satisfies all the Horn clauses in Figure~\ref{one}.  As a
consequence of the soundness of first-order classical logic, any atom
provable from the clauses in Figure~\ref{one}, must be true in
$\Mscr$.  By construction of $\Mscr$, this means that the desired 
invariant holds for all atoms proved from the program.

The approach suggested here using linear logic and deduction remains
syntactic and proof theoretic: in particular, showing that a model
satisfies a Horn clause is replaced by a deduction within linear
logic.

\section{Sets approximations}\label{sets}

It is rather easy to encode sets and the equality and subset
judgments on sets into linear logic.  In fact, the transition to set
from multiset is provided by the use of the linear logic exponential:
since we are using disjunctive encoding of collections (see the
discussion in Section~\ref{msets}), we use the $\quest$ exponential
(if we were using the conjunctive encoding, we would use the $\bang$
exponential).

The expression $\quest\llitem{t}$ can be seen as describing the
presence of an item for which the exact multiplicity does not matter:
this formula represents the capacity to be used any number of times.
Thus, the set $\{x_1,\ldots,n_n\}$ can be encoded as
$\quest\llitem{x_1}\lpar\cdots\lpar\quest\llitem{x_n}$.  Using logical
equivalences of linear logic, this formula is also equivalent to the
formula $\quest(\llitem{x_1}\oplus\cdots\oplus\llitem{x_n})$.  This
latter encoding is the one that we shall use for building our encoding
of sets.

A {\em set expression} is a formula in linear logic
built from the predicate symbol {\sl item} (denoting the the singleton
set), the linear logic additive disjunction $\oplus$ (for set union),
and the unit $\zero$ for $\oplus$ (used to denote the empty set).  We
shall also allow a predicate variable (a variable of type $o$) to be
used to denote a (necessarily open) set expression.  An example of an
open multiset expression is $\llitem{f(X)}\oplus\zero\oplus Y$, where
$Y$ is a variable of type $o$, $X$ is a first-order variable, and $f$
is some first-order term constructor.

\newcommand{\setsub}{\subseteq}
\newcommand{\seteq }{\mathrel{\buildrel s\over =}}
\newcommand{\setmodels }{\models_{s}}

Let $S$ and $T$ be two set expressions.  The two {\em set
judgments} that we wish to capture are set inclusion, written as
$S\setsub T$, and equality, written as $S\seteq T$.  We shall use the
syntactic variable $\rho$ to range over these two judgments, which are
formally binary relations of type $o\ra o\ra o$.
A {\em set statement} is a formula of the form
$$\forall\bar x[S_1\Rho_1 T_1\with\cdots\with S_n\Rho_n T_n 
           \iimp S_0\Rho_0 T_0 ]$$
where the quantified variables $\bar x$ are either first-order or of
type $o$ and formulas $T_0,S_0,\ldots,T_n,S_n$ are possibly open
set expressions.  

If $S$ and $T$ are closed set expressions, then we
write $\setmodels S\setsub T$ whenever the set (of closed
first-order terms) denoted by $S$ is contained in the set denoted
by $T$, and we write $\setmodels S\seteq T$ whenever the sets
denoted by $S$ and $T$ are equal.   Similarly, we write
$$\setmodels\forall\bar x[S_1\Rho_1 T_1\with\cdots\with 
                         S_n\Rho_n T_n\iimp S_0\Rho_0 T_0 ]
$$ 
if for all closed substitutions $\theta$ such that $\setmodels
S_i\theta\Rho_i T_i\theta$ for all $i=1,\ldots,n$, it is the case that
$\setmodels S_0\theta\Rho_0 T_0\theta$.

The following Proposition is central to our use of linear logic to
establish set statements for Horn clause programs.

\begin{prop}\label{set ok}
Let $S_0, T_0, \ldots, S_n, T_n$ ($n\ge0$) be set expressions all
of whose free variables are in the list of variables $\bar x$.  For
each judgment $s\Rho t$ we write $s\hatRho t$ to denote $\quest s\limp
\quest t$ if $\Rho$ is $\setsub$ and $(\quest s\limp \quest
t)\with(\quest t\limp \quest s)$ if $\Rho$ is $\seteq$.  If 
$$\forall \bar x [
   S_1\hatRho_1 T_1 \with\ldots\with
   S_n\hatRho_n T_n \iimp S_0\hatRho_0 T_0 ]
$$
is provable in linear logic, then 
$$\models_{s}
  \forall\bar x[S_1\Rho_1 T_1\with\cdots\with S_n\Rho_n T_n 
                \iimp S_0\Rho_0 T_0 ]$$
\end{prop}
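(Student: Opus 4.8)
The plan is to prove this as the soundness statement for sets, mirroring the multiset Proposition~\ref{ms ok}, with the single-item characterization adapted to the $\quest$-based encoding. The crux is an atomic adequacy lemma: for \emph{closed} set expressions $S$ and $T$, denoting the finite sets of closed first-order terms $A$ and $B$, one has $\vdash\quest S\limp\quest T$ in linear logic if and only if $A\subseteq B$, and consequently $\vdash(\quest S\limp\quest T)\with(\quest T\limp\quest S)$ if and only if $A=B$. The equality case reduces to inclusion by the $\with$-right rule, so everything rests on characterizing provability of $\quest S\limp\quest T$. Granting this lemma, the Proposition follows by a cut-and-instantiate argument of exactly the shape used for multisets.

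To prove the lemma I would first normalize the exponentiated expressions. Using the linear logic equivalences $\quest(X\oplus Y)\lequiv(\quest X\lpar\quest Y)$ and $\quest\zero\lequiv\bot$ (invoked informally already in Section~\ref{sets}), each $\quest S$ is provably equivalent to $\quest\llitem{a_1}\lpar\cdots\lpar\quest\llitem{a_k}$ where $a_1,\ldots,a_k$ enumerate $A$ (and to $\bot$ when $A$ is empty), and likewise for $\quest T$. It then suffices to decide provability of $\quest\llitem{a_1}\lpar\cdots\lpar\quest\llitem{a_k}\limp\quest\llitem{b_1}\lpar\cdots\lpar\quest\llitem{b_l}$. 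For the direction $A\subseteq B\Rightarrow$ provable, I build the derivation directly: because $\quest$ admits weakening and contraction, each item of $A$ is matched against an identical item of $B$ by an initial rule under a dereliction, the items of $B\setminus A$ are discarded by weakening, and repeated items are duplicated by contraction. For the converse, provable $\Rightarrow A\subseteq B$, I argue semantically: interpret the formula in a sound phase model over the free commutative monoid on the occurring tokens, valuing $\llitem t$ so that the token $t$ is detected faithfully; then $\quest S\limp\quest T$ can be valid only when every token of $A$ is already available in $B$. (A cut-free proof analysis gives the same conclusion: in a normal proof each item $a$ of $A$ must ultimately be matched, via dereliction and an initial rule, to an occurrence of the same item in $T$, forcing $a\in B$.)

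With the lemma available, the Proposition goes through as follows. Fix a closed substitution $\theta$ with $\setmodels S_i\theta\Rho_i T_i\theta$ for $i=1,\ldots,n$; the goal is $\setmodels S_0\theta\Rho_0 T_0\theta$. Because the displayed $\forall\bar x$-formula is provable, instantiating $\bar x$ by $\theta$ (the substitution-into-proofs principle of Section~\ref{subcon}, which applies to the predicate-type components of $\bar x$ as well) gives a proof of $S_1\theta\hatRho_1 T_1\theta\with\cdots\with S_n\theta\hatRho_n T_n\theta\iimp S_0\theta\hatRho_0 T_0\theta$. By the adequacy lemma in its completeness direction, each semantic hypothesis $\setmodels S_i\theta\Rho_i T_i\theta$ yields a proof of the closed formula $S_i\theta\hatRho_i T_i\theta$; combining these by $\with$-right proves the whole antecedent, and promotion yields its $\bang$. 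Cutting against the instantiated implication---recall that $B\iimp C$ abbreviates $(\bang B)^\perp\lpar C$---then proves $S_0\theta\hatRho_0 T_0\theta$, and the soundness direction of the lemma converts this proof back into the set fact $\setmodels S_0\theta\Rho_0 T_0\theta$.

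The main obstacle is the soundness half of the atomic adequacy lemma: that provability of $\quest S\limp\quest T$ genuinely forces the inclusion $A\subseteq B$ and nothing weaker. The counting argument available for multisets (where $\vdash S\limp T$ pins the multiset down on the nose) is lost once the $\quest$ exponentials permit weakening and contraction, so one must exploit exactly that invariance, through either a phase model or a cut-free proof analysis. The remaining ingredients---normalization of $\quest$ over $\oplus$ and $\zero$, the easy completeness direction, and the cut-and-instantiate bookkeeping---are routine and parallel the multiset proof.
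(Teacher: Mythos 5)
The paper never actually proves this proposition --- it is an extended abstract and Proposition~\ref{set ok}, like Proposition~\ref{ms ok}, is asserted without argument --- so there is no official proof to compare against; judged on its own, your argument is correct and is precisely what the machinery of Sections~\ref{subtyp}--\ref{subass} is set up to deliver. You correctly isolate the one non-routine ingredient, the atomic adequacy lemma that for closed set expressions $S,T$ denoting sets $A,B$ one has $\vdash \quest S\limp\quest T$ iff $A\subseteq B$, and both directions of it are genuinely needed (completeness to cut in the hypotheses, soundness to read off the conclusion). Two remarks. First, the hard (soundness) direction of that lemma can be discharged entirely with the paper's own toolkit rather than a phase model: Proposition~\ref{additive ps} with $n=0$ reduces provability of $\quest S\limp\quest T$ to derivability of $\twoseq{\cdot\,;S}{T}$ in the system of Figure~\ref{additive}, and with an empty program $\Gamma$ the only applicable rules are $\oplus$L followed by $\oplus$R, which succeed exactly when every disjunct $\llitem{a}$ of $S$ occurs as a disjunct of $T$, i.e.\ when $A\subseteq B$ (that proposition appears later in the paper but its proof is independent, so there is no circularity). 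Second, your description of the easy direction is slightly off in its bookkeeping: in the normal form $\quest(\llitem{a_1}\oplus\cdots\oplus\llitem{a_k})$ one does not discard the items of $B\setminus A$ by weakening or duplicate repeated items by contraction; after $\quest$L and $\oplus$L one simply derelicts $\quest T$ on the right and selects the matching disjunct with $\oplus$R (weakening and contraction are only needed if you insist on working with the $\lpar$-of-$\quest$ form). Neither point is a gap; the proof goes through.
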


Lists can be approximated by sets by using the following substitution: 
$$\Tok{nil}\mapsto \zero \qquad \Tok{cons}\mapsto \lambda x\lambda
y.\ \Llitem{x}\oplus y.$$ 
Under such a mapping, the list (\Tok{cons}~{\Tok 1}~(\Tok{cons}~{\Tok
2}~(\Tok{cons}~{\Tok 2}~\Tok{nil}))) is mapped to the set
expression $\Llitem{1}\oplus\Llitem{2}\oplus\Llitem{2}\oplus\zero$.
This expression is equivalent ($\lequiv$) to the set expression 
$\Llitem{1}\oplus\Llitem{2}$.

For a simple example of using set approximates, consider modifying the
sorting program provided before so that duplicates are not kept in the
sorted list.  Do this modification by replacing the previous
definition for splitting a list with the clauses in
Figure~\ref{three}.  That figure contains a new definition of
splitting that contains three clauses for deciding
whether or not the ``pivot'' for the 
splitting {\tt X} is equal to, less than (using the {\tt lt}
predicate), or greater than the first 
member of the list being split.  Using the following substitutions for
predicates 
$$\halign{\qquad\hfill $# $ & $\mapsto #$\hfill\cr
 \Tok{append} & \lambda x\lambda y\lambda z.\ \quest (x\oplus y)\lequiv\quest z\cr
  \Tok{split} & \lambda u\lambda x\lambda y\lambda z.\ 
                  \quest x \limp \quest(\llitem{u}\oplus y\oplus z)\cr
  \Tok{sort}  & \lambda x\lambda y.\ \quest x\lequiv \quest y\cr
}$$
(as well as the trivial substitution for \Tok{lt} and \Tok{ge}), 
we can show that sort relates two lists only if those lists are
approximated by the same set.

\begin{figure}
$$  \frac{}{\twoseq{\Gamma ; A_i}{A_1\oplus\cdots\oplus A_n}}
  \hbox{\ $\oplus$ R}
$$
$$
\frac{\twoseq{\Gamma ; A_1}{C} \quad \ldots\quad
        \twoseq{\Gamma ; A_n}{C}}
       {\twoseq{\Gamma ; A_1\oplus\cdots\oplus A_n}{C}}
  \hbox{\ $\oplus$ L}
$$
$$
  \frac{\twoseq{\Gamma ; B_1\oplus\cdots\oplus B_m}{C}} 
       {\twoseq{\Gamma ; A}{C}}
  \hbox{\ BC}
$$
Here, $n,m\ge 0$ and in the BC (backchaining) inference rule, 
the formula
$
 \quest(A_1\oplus\cdots\oplus A_n) \limp \quest(B_1\oplus\cdots\oplus
 B_m) 
$ must be a member of $\Gamma$ and $A\in\{A_1,\ldots,A_n\}$.
\caption{Specialized proof rules for proving set statements.}
\label{additive}
\end{figure}

In the case of determining the validity of a set statement, the use of
linear logic here appears to be rather weak when compared to the large
body of results for solving set-based constraint systems
\cite{aiken94ppcp,pacholski97cp}. 

\section{Automation of deduction}\label{automation}

We describe how automation of proof for the linear logic translations
of set and multiset statements given in Propositions~\ref{ms ok}
and~\ref{set ok} can be performed. 

In order to understand how to automatically prove the required
formulas, we first provide a normal form theorem for the
fragment of linear logic for which we are interested.  The key result
of linear logic surrounding the search for cut-free proofs is given by
the completeness of {\em focused proofs} \cite{andreoli92jlc}.
Focused proofs are a normal form that significantly generalizes
standard completeness results in logic programming, including the
completeness of SLD-resolution and uniform proofs as well as various
forms of bottom-up and top-down reasoning.

We first analyze the nature of proof search for the linear logic
translation of set statements.  Note that when considering provability
of set statements, there is no loss of generality if the only set
judgment it contains is the subset judgment since set equality can be
expressed as two inclusions.  We now prove that the proof system in
Figure~\ref{additive} is sound and complete for proving set
statements. 

\begin{prop}\label{additive ps}
Let $S_0, T_0, \ldots, S_n, T_n$ ($n\ge0$) be set expressions all
of whose free variables are in the list of variables $\bar x$.  The
formula 
$$\forall \bar x
  [(\quest S_1\limp \quest T_1)\with\ldots\with
   (\quest S_n\limp \quest T_n)\iimp (\quest S_0\limp \quest T_0)]
$$
is provable in linear logic if and only if the sequent 
$$\twoseq{(\quest S_1\limp \quest T_1),\ldots,(\quest S_n\limp
  \quest T_n); S_0}{T_0}$$
is provable using the proof system in Figure~\ref{additive}. 
\end{prop}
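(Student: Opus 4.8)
The plan is to prove the equivalence in two directions, after fixing the reading of a Figure~\ref{additive} sequent $\twoseq{\Gamma ; A}{C}$ as the linear logic sequent $\twoseq{\bang\Gamma, A}{\quest C}$, in which the clauses of $\Gamma$ occupy the reusable (banged) zone and $A,C$ are the $\oplus$-structured set expressions. First I would reduce the quantified formula to this core shape. The outer $\forall\bar x$, the $\iimp$ (whose antecedent is banged by definition), the $\with$ underneath that $\bang$, and the goal implication $\quest S_0\limp \quest T_0$ are each invertible on the relevant side, so introducing eigenvariables for $\bar x$, moving every $\bang(\quest S_i\limp \quest T_i)$ into the reusable zone, and turning $\quest S_0\limp \quest T_0$ into $\quest S_0$ on the left and $\quest T_0$ on the right shows the displayed formula is provable iff $\twoseq{\bang\Gamma, \quest S_0}{\quest T_0}$ is, with $\Gamma=\{\quest S_i\limp \quest T_i\}$; this is in turn interprovable (by dereliction on $S_0$ and a cut) with $\twoseq{\bang\Gamma, S_0}{\quest T_0}$, the reading above.

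For the soundness direction (Figure~\ref{additive} $\Rightarrow$ linear logic) I would verify that each rule preserves provability of the interpreted sequent. The rule $\oplus$L is exactly the invertible $\oplus$-left rule; the axiom $\oplus$R follows from $\twoseq{A_i}{A_1\oplus\cdots\oplus A_n}$ by a dereliction on the right and weakening of $\bang\Gamma$; and BC is a cut, since $A=A_i$ gives $\twoseq{A}{\quest(A_1\oplus\cdots\oplus A_n)}$, a copy of the clause then yields $\quest(B_1\oplus\cdots\oplus B_m)$, and the premise $\twoseq{\bang\Gamma, B_1\oplus\cdots\oplus B_m}{\quest C}$, promoted to carry $\quest$ on its left formula, is cut against it. The nullary cases cause no trouble: a $\zero$ on the left closes a branch because $\zero$-left is an axiom, and an empty clause body is handled because $\twoseq{\quest\zero}{\quest C}$ is derivable. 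Reapplying the inverted outer rules then rebuilds a proof of the original formula.

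For the completeness direction (linear logic $\Rightarrow$ Figure~\ref{additive}) I would invoke the completeness of focused proofs~\cite{andreoli92jlc}: the core sequent, if provable, has a cut-free focused proof, on which I would induct. In the asynchronous phase the only steps available decompose the positive left formula by $\oplus$-left --- these are precisely the $\oplus$L steps --- and release $T_0$ into the reusable goal zone, leaving a single atom $A$ on the left. The focusing phase then offers exactly two moves: focusing on the right goal, which forces a run of $\oplus$-right selections ending in an initial rule, that is, an $\oplus$R axiom with $A$ among the disjuncts; or focusing on a clause $\quest S_i\limp \quest T_i$ from the reusable zone, whose $\limp$-left rule produces a subproof of $\quest S_i$ from the resource $A$ and a continuation in which $\quest T_i$ is released onto the left --- the shape of a BC step leading to $\twoseq{\Gamma ; T_i}{C}$.

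The hard part will be this clause-focusing case, because focusing allows the antecedent subproof of $\quest S_i$ to backchain further, whereas BC in Figure~\ref{additive} permits only the syntactic side condition $A\in\{A_1,\ldots,A_n\}$. To close this gap I would establish an admissible transitivity lemma for the system of Figure~\ref{additive}: if $\twoseq{\Gamma ; X}{T}$ and $\twoseq{\Gamma ; T}{C}$ are derivable then so is $\twoseq{\Gamma ; X}{C}$, proved by structural induction on the first derivation (the $\oplus$L and BC cases recurse and reapply their rule, while an $\oplus$R leaf matching a disjunct $t$ of $T$ is replaced by the corresponding $\oplus$L branch of the second derivation). Applying the induction hypothesis to the two premises of the focused clause step gives $\twoseq{\Gamma ; A}{S_i}$ and $\twoseq{\Gamma ; T_i}{C}$; since the clause itself makes $\twoseq{\Gamma ; S_i}{T_i}$ derivable (one BC step followed by the identity derivation $\twoseq{\Gamma ; T_i}{T_i}$), transitivity composes these three into $\twoseq{\Gamma ; A}{C}$. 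The interleaved induction terminates because each appeal is to a strictly smaller focused proof, and the remaining bookkeeping --- a fixed polarity for the atomic $\llitem{t}$ formulas and the type-$o$ variables, and the routine promotions surrounding the $\quest$ modalities --- should be straightforward.
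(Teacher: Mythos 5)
Your proposal is correct in outline and shares the paper's overall strategy---soundness by checking each rule of Figure~\ref{additive} against ordinary linear logic derivations, completeness by appeal to Andreoli's focusing theorem---but the two completeness arguments pivot on different devices. The paper's entire completeness argument is a polarity assignment: every atom (the formulas $\llitem{(\cdot)}$ and the type-$o$ variables among $\bar x$) is declared \emph{negative}, and $\twoseq{\Gamma;S}{T}$ is translated into a particular triadic sequent; with that choice the claim is that a focused proof simply \emph{is} a Figure~\ref{additive} derivation---in particular, focusing on a clause forces the antecedent $\quest S_i$ to be discharged immediately against the single atom in the linear zone, which is exactly the side condition $A\in\{A_1,\ldots,A_n\}$ of BC, so no further normalization is required. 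You instead treat polarity as bookkeeping, accept that the antecedent subproof of $\quest S_i$ may itself backchain, and repair the mismatch with an admissible transitivity lemma for the system of Figure~\ref{additive} (from $\twoseq{\Gamma;X}{T}$ and $\twoseq{\Gamma;T}{C}$ infer $\twoseq{\Gamma;X}{C}$). That lemma is true and your induction goes through, because the system is syntax-directed on the left formula: a derivation of $\twoseq{\Gamma;T}{C}$ with $T$ a proper disjunction must begin with $\oplus$L, so the $\oplus$R-leaf case of your induction can indeed select the matching branch. What your route buys is robustness---it does not depend on tuning the polarity discipline, and it makes explicit the cut-like content that the paper's ``follows directly from the structure of focused proofs'' leaves implicit; what it costs is an extra interleaved induction and some remaining bookkeeping (for instance, you should observe that promotion on the $\bang(T_i^\perp)$ component of the focused clause forces the lone linear atom onto the $\quest S_i$ side of the split, so the case analysis over how the context divides is degenerate, and that replacing $\quest T_i$ by $T_i$ in the continuation must be folded into your induction measure). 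Both arguments are sketches at a comparable level of rigor; yours identifies, and then discharges, precisely the step the paper asserts without comment.
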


\begin{proof}
The soundness part of this proposition (``if'') is easy to show.  For
completeness (``only if''), we use the completeness of focused proofs
in \cite{andreoli92jlc}.  In order to use this result of focused
proofs, we need to give a polarity to all atomic formulas.  We do this
by assigning all atomic formulas (those of the form $\llitem{(\cdot)}$
and those symbols in $\bar x$ of type $o$) negative polarity.  Second,
we need to translation the two sided sequent $\twoseq{\Gamma;S}{T}$ to
$\Gamma^\perp; T\Uparrow S^\perp$ when $S$ is not atomic (that is, its
top-level logical connective is $\oplus$) and to $\Gamma^\perp,
T;S^\perp\Uparrow\cdot$ when $S$ is a atom.  Completeness then follows
directly from the structure of focused proofs.
\end{proof}

Notice that the resulting proofs are essentially bottom-up: one
reasons from formulas on the left of the sequent arrow to formulas on
the right.

We can now conclude that it is decidable to determine whether or not
the linear logic translation of a set statement is provable. 
Notice that in a proof built using the inference rules in
Figure~\ref{additive}, if the endsequent is
$\twoseq{\Gamma;S}{T}$ then all sequents in the proof have
the form $\twoseq{\Gamma;S'}{T}$, for some $S'$.  Thus, the search for
a proof either succeeds (proof search ends by placing $\oplus$ R on
top), or fails to find a proof, or it cycles, a case we can always
detect since there is only a 
finite number of atomic formulas that can be $S'$.

\begin{figure}
$$\frac{}
  {\twoseq{\Gamma;A_1\lpar\cdots\lpar A_n}{A_1,\ldots,A_n}}
\hbox{\ $\lpar$ L}
$$
$$
\frac{\twoseq{\Gamma ; S}{T_1, T_2,\Delta}}
     {\twoseq{\Gamma ; S}{T_1\lpar T_2,\Delta}}
  \hbox{\ $\lpar$ R}
$$
$$
  \frac{\twoseq{\Gamma ; S}{A_1,\ldots, A_n,\Delta}}
       {\twoseq{\Gamma ; S}{B_1,\ldots, B_m,\Delta}}
  \hbox{\ BC}
$$
Here, $n,m\ge 0$ and in the BC (backchaining) inference rule, 
it must be the case that the formula 
$$(A_1\lpar\cdots\lpar A_n) \limp (B_1\lpar\cdots\lpar B_m) 
$$
is a member of $\Gamma$.

\caption{Specialized proof rules for proving multiset statements.}
\label{multiplicative}
\end{figure}

The proof system in Figure~\ref{multiplicative} can be used to
characterize the structure of proofs of the linear logic encoding of
multiset statements.  Let 
$$\forall \bar x [
   S_1\hatRho_1 T_1 \with\ldots\with
   S_n\hatRho_n T_n \iimp S_0\hatRho_0 T_0 ]
$$
be the translation of a multiset statement into linear logic.
Provability of this formula can be reduced to attempting to prove 
$S_0\hatRho_0 T_0$ from assumptions of the form 
$$(A_1\lpar\cdots\lpar A_n) \limp (B_1\lpar\cdots\lpar B_m),$$
where $A_1,\ldots,A_n,B_1,\ldots,B_m$ are atomic formulas.  Such
formulas can be called {\em multiset rewriting clauses} since
back\-chaining on such clauses amounts to rewriting the right-hand-side
multiset of a sequent (see rule BC in
Figure~\ref{multiplicative}).  Such rewriting clauses are particularly
simple since they do not involve quantification.

\begin{prop}\label{multiplicative ps}
Let $S_0$ and $T_0$ be multiset expressions all of whose free
variables are in the list of variables $\bar x$ and let $\Gamma$ be a
set of multiset rewriting rules.  The formula $S_0\limp T_0$ is a
linear logic consequence of $\Gamma$ if and only if the sequent 
$\twoseq{\Gamma ; S_0}{T_0}$ is provable using the
inference rules in Figure~\ref{multiplicative}.
\end{prop}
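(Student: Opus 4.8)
The plan is to follow the strategy used for Proposition~\ref{additive ps}: establish the soundness direction (``if'') by a direct induction, and the completeness direction (``only if'') by appealing to the completeness of focused proofs \cite{andreoli92jlc}. For soundness I would induct on the height of a derivation in the system of Figure~\ref{multiplicative}, showing that each specialized rule is admissible in full linear logic, so that derivability of $\twoseq{\Gamma;S_0}{T_0}$ there yields $S_0\limp T_0$ as a linear logic consequence of $\Gamma$. The $\lpar$~R rule is just the standard right introduction of $\lpar$; the $\lpar$~L axiom $\twoseq{\Gamma;A_1\lpar\cdots\lpar A_n}{A_1,\ldots,A_n}$ is obtained by $n{-}1$ left $\lpar$ rules followed by initial rules (the case $n=0$ giving the $\bot$-left axiom); and a BC step is backchaining against the clause $(A_1\lpar\cdots\lpar A_n)\limp(B_1\lpar\cdots\lpar B_m)$ drawn from the context, realized by an implication-left/cut combination. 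Since the clauses of $\Gamma$ are reusable (hence $\bang$'ed), contraction licenses the repeated BC steps.

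For completeness I would reuse the focusing apparatus exactly as in Proposition~\ref{additive ps}. Assign \emph{negative} polarity to every atomic formula (both the $\llitem{(\cdot)}$ atoms and the variables of $\bar x$ of type $o$); then every multiset expression, being built from $\lpar$, $\bot$, and negative atoms, is a negative (asynchronous) formula, while its negation is positive. I would translate the two-sided sequent $\twoseq{\Gamma;S_0}{T_0}$ into a one-sided focused sequent in which the reusable clauses sit in the $\quest$-zone, the right-hand multiset expression $T_0$ is placed in the asynchronous zone, and $S_0^\perp$ is the positive formula available for focusing. The three phases of a focused proof then read directly off Figure~\ref{multiplicative}: the asynchronous phase decomposing $\lpar$ and $\bot$ on the right is exactly $\lpar$~R; focusing on $S_0^\perp=A_1^\perp\ot\cdots\ot A_n^\perp$ and closing each branch against a matching negative atom is exactly the $\lpar$~L axiom; and focusing on the negation $C^\perp=(A_1\lpar\cdots\lpar A_n)\ot B_1^\perp\ot\cdots\ot B_m^\perp$ of a clause is exactly one BC step, since the synchronous phase consumes $B_1,\ldots,B_m$ from the right multiset by initial rules and releases $A_1\lpar\cdots\lpar A_n$ to the asynchronous zone, where it is immediately flattened to $A_1,\ldots,A_n$.

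The main obstacle is this last correspondence: I must verify that focusing on $C^\perp$ produces precisely one BC step with the correct bookkeeping — that the tensor structure of $C^\perp$ forces all of $B_1,\ldots,B_m$ to be matched and consumed within a single synchronous phase, that $A_1\lpar\cdots\lpar A_n$ is the unique negative conjunct sent back to the asynchronous context, and that the ensuing mandatory asynchronous phase reconstitutes exactly the flat multiset of atoms on which the next BC or the final $\lpar$~L can act. Confirming the degenerate cases ($n=0$ or $m=0$, where the disjunction collapses to $\bot$ and the tensor to $\One$) and checking that no focused proof falls outside this shape are the delicate points; once they are settled, completeness follows from the structure of focused proofs just as in Proposition~\ref{additive ps}.
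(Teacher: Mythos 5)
Your overall strategy is the same as the paper's: the paper disposes of soundness with ``easy to show'' (your induction on derivation height, with $\lpar$~L unfolded into left-$\lpar$ and initial rules and BC realized by implication-left plus contraction on the $\bang$'ed clauses, is exactly the routine argument being waved at), and for completeness the paper, like you, appeals to the completeness of focused proofs in \cite{andreoli92jlc} --- though its primary justification is actually a direct citation to Proposition~2 of \cite{miller92welp}, which you could invoke to shortcut the focusing analysis entirely.

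There is one concrete point where you and the paper part ways: the paper fixes the polarity of all atomic formulas to be \emph{positive} for this proposition, deliberately contrasting with the \emph{negative} assignment used for Proposition~\ref{additive ps}, whereas you carry the negative assignment over unchanged. This is not a cosmetic detail --- the bias on atoms is precisely what determines whether the initial rule can fire inside the synchronous phase, and hence what shape the focused proofs take. That said, your phase-by-phase analysis is internally coherent under the negative assignment: for a clause occurrence $C^\perp=(A_1\lpar\cdots\lpar A_n)\ot B_1^\perp\ot\cdots\ot B_m^\perp$ to collapse to a single BC step, each $B_i^\perp$ reached under focus must be a \emph{positive} literal so that it is forced to close immediately by initial against a stored $B_i$, and that happens exactly when the atoms themselves are negative; the same consideration makes focusing on $S_0^\perp$ coincide with the $\lpar$~L axiom, and since negative atoms can never themselves be selected by the decide rule, every focused proof of the translated sequent is forced into the straight-line shape of Figure~\ref{multiplicative}. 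So your argument does establish the correspondence, but you should explicitly reconcile your polarity choice with the paper's opposite prescription (and with the paper's remark that these proofs are ``top-down, goal-directed'') rather than asserting that you are proceeding ``exactly as in Proposition~\ref{additive ps}'' --- the two propositions are the one place in the paper where the polarity assignments are intentionally dual. The remaining points you flag as delicate (the degenerate cases $n=0$, $m=0$ and the exhaustiveness of the case analysis on decide steps) are indeed the right things to check and are unproblematic once the polarity question is settled.
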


\begin{proof}
The soundness part of this proposition (``if'') is easy to show.
Completeness (``only if'') is proved elsewhere, for example, in
\cite[Proposition 2]{miller92welp}.  It is also an easy consequence of
the the completeness of focused proofs in \cite{andreoli92jlc}: fix
the polarity to all atomic formulas to be positive.
\end{proof}

Notice that the proofs using the rules in Figure~\ref{multiplicative}
are straight line proofs (no branching) and that they are top-down (or
goal-directed).  Given these observation, it follows that determining
if $S_0\limp T_0$ is provable from a set of multiset rewriting clauses
is decidable, since this problem is contained within the reachability
problem of Petri Nets \cite{esparza94eatcs}.  Proving a multiset
inclusion judgment $\exists q (S_0\lpar q\limp T_0)$ involves first
instantiating this higher-order quantifier.  In principle, this
instantiation can be delayed until attempting to apply the sole
instance of the $\lpar$ L rule (Figure~\ref{multiplicative}).

\section{List approximations}\label{lists}

We now consider using lists as approximations.  Since lists have more
structure than sets and multisets, it is more involved to encode and
reason with them.  We only illustrate their use and do not follow a
full formal treatment for them.

Since the order of elements in a list is important, the encoding of
lists into linear logic must involve a connective that is not
commutative.  (Notice that both $\lpar$ and $\oplus$ are commutative.)
Linear implication provides a good candidate for encoding the order
used in lists.
For example, consider proof search with the formula 
$$\llitem{a}\bla(p\bla (\llitem{b}\bla (p\bla \bot)))$$ on the
right.  (This formula is equivalent to $\llitem{a}\lpar (p^\bot\ot
(\llitem{b}\lpar p^\bot))$.)  Such a formula can be seen as describing
a process that is willing to output the item $a$ then go into input
mode waiting for the atomic formula $p$ to appear.  If that formula
appears, then item $b$ is output and again it goes into input waiting
mode looking for $p$.  If another occurrence of $p$ appears, this
process becomes the inactive process.  Clearly, $a$ is output
prior to when $b$ is output: this ordering is faithfully captured by
proof search in linear logic.  Such an encoding of asynchronous process
calculi into linear logic has been explored in a number of papers:
see, for example, \cite{kobayashi94fac,miller03fcs}.

The example above suggests that lists and list equality can be
captured directly in linear logic using the following encoding:
$$ \Tok{nil} \mapsto \lambda l.\bot \qquad
   \Tok{cons}\mapsto \lambda x\lambda R\lambda l.\ 
                     \llitem{x}\bla (l\bla (R~l))
$$
The encoding of the list, say
$(\Tok{cons}~a~(\Tok{cons}~b~\Tok{nil}))$,
is given by the $\lambda$-abstraction 
$$\lambda l. \llitem{a}\bla(l\bla (\llitem{b}\bla (l\bla \bot))).$$

The following proposition can be proved by induction on the length of
the list $t$. 

\begin{prop}\label{list eq}
Let $s$ and $t$ be two lists (built using \Tok{nil} and \Tok{cons})
and let $S$ and $T$ be the translation of those lists into expressions
of type $o\ra o$ via the substitution above.  Then $\forall l. (S l)
\lequiv (T l)$ is provable in 
linear logic if and only if $s$ and $t$ are the same list.
\end{prop}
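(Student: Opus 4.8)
The plan is to prove the two directions of the equivalence separately and, throughout, to appeal to cut elimination so that only cut-free proofs need be considered. The ``if'' direction is immediate: if $s$ and $t$ are the same list then $S$ and $T$ are the same $\lambda$-abstraction, so $\forall l.(S l)\lequiv(T l)$ is the instance $\forall l.(S l)\lequiv(S l)$ of reflexivity, which is provable because $A\limp A$ is. For the converse I would first unfold $\lequiv$: since $\forall$ distributes over $\with$ and both have invertible right rules, provability of $\forall l.(S l)\lequiv(T l)$ entails provability of each conjunct, in particular of $\forall l.(S l\limp T l)$. As it turns out this one implication already pins down $s=t$, so I would work with it alone. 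Applying $\forall$-right I introduce a fresh eigenvariable $l$ -- an atom of type $o$ distinct from every item and every term constructor -- and then, using $B\limp C = B^{\perp}\lpar C$ together with the invertible $\lpar$-right rule, reduce to the one-sided sequent $\vdash (S l)^{\perp}, T l$.

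Next I would record the shape of the encodings. For $s=[a_1,\ldots,a_n]$,
$$S l=\llitem{a_1}\lpar\bigl(l^{\perp}\ot(\cdots(\llitem{a_n}\lpar(l^{\perp}\ot\bot))\cdots)\bigr),$$
an alternation of the item atoms $\llitem{a_i}$ with copies of $l^{\perp}$, closed off by $\bot$; the encoding $T l$ of $t=[b_1,\ldots,b_m]$ has the same shape. Passing to $(S l)^{\perp}$ interchanges $\ot$ and $\lpar$ and turns the $n$ copies of $l^{\perp}$ into $n$ copies of $l$, so the goal sequent $\vdash (S l)^{\perp}, T l$ lies in purely multiplicative linear logic with units. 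In any cut-free proof every atom occurrence is therefore the principal formula of exactly one identity axiom linking it to its dual, there being neither weakening nor contraction. Because $l$ is fresh and of type $o$, an $l$ can be linked only to an $l^{\perp}$; as $(S l)^{\perp}$ carries exactly $n$ copies of $l$ and $T l$ exactly $m$ copies of $l^{\perp}$, provability forces $n=m$. This settles the base case of an induction on the length of $t$: when $t$ is $\Tok{nil}$ we have $T l=\bot$, the $\bot$-rule reduces the goal to $\vdash (S l)^{\perp}$, and unless $s$ is $\Tok{nil}$ this sequent has unmatched item atoms and is unprovable.

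For the inductive step, with $s=\Tok{cons}~a~s'$ and $t=\Tok{cons}~b~t'$ both non-empty, I would push the invertible $\lpar$-right rules downward and then examine the non-invertible $\ot$-right splittings. The point to establish is that these splittings are forced to be the aligned ones: the topmost $\ot$ of $(S l)^{\perp}$ must send $(\llitem{a})^{\perp}$ into a branch whose only companion is the head atom $\llitem{b}$ of $T l$, so the closing identity forces $a=b$, and the first $l$ of $(S l)^{\perp}$ is likewise forced to pair with the first $l^{\perp}$ of $T l$. Once these forced steps are taken, the proof contains a subproof of $\vdash (S' l)^{\perp}, T' l$, that is, of $S' l\limp T' l$ for the tails. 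The induction hypothesis (the tails being shorter) then yields $s'=t'$, and together with $a=b$ this gives $s=t$.

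The main obstacle is exactly this forced alignment. Since $\lpar$ and $\ot$ are commutative, the bare atom count would also permit crossing matchings -- pairing $(\llitem{a})^{\perp}$ with some later $\llitem{b_j}$ -- and it is the interleaved eigenvariable backbone that must be shown to exclude them. The operative fact is that to reach an item atom buried $k$ synchronizations deep in $T l$, a branch must first consume $k$ copies of $l^{\perp}$, each demanding its own matching $l$; tracking this linear resource shows that the only globally consistent matching is the order-preserving one. I would make this precise either by the explicit case analysis on the $\ot$-right partitions sketched above, showing that every non-aligned partition leaves an atom with no available dual, or, more systematically, by invoking the completeness of focused proofs used for the earlier propositions, assigning polarities so that the synchronization atoms dictate the order of decomposition. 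The remaining bookkeeping -- invertibility of $\lpar$-right, $\with$-right and $\forall$-right, and the unit steps matching the terminal $\One$ against $\bot$ -- is routine.
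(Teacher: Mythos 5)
Your proposal is correct and follows exactly the route the paper indicates --- induction on the length of the list $t$ --- the paper itself offering no detail beyond that one-line remark. Your identification of the forced alignment along the eigenvariable backbone as the crux is on target, and the explicit case analysis on the $\ot$-right splittings that you sketch does go through (each non-aligned partition strands either an $l$, an $l^{\perp}$, or an item atom without a dual), so either of your two proposed ways of discharging that step would complete the argument.
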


This presentation of lists can be ``degraded'' to multisets simply by
applying the translation of a list to the formula $\bot$.  For
example, applying the translation of
$(\Tok{cons}~a~(\Tok{cons}~b~\Tok{nil}))$ to $\bot$ yields the
formulas $$\llitem{a}\bla(\bot\bla (\llitem{b}\bla (\bot\bla \bot)))$$
which is linear logically equivalent to $\llitem{a}\lpar \llitem{b}$.

Given this presentation of lists, there appears to be no simple
combinator for, say, list concatenation and, as a result, there is no
direct way to express the judgments of prefix, suffix, sublist, etc.
Thus, beyond equality of lists (by virtual of Proposition~\ref{list
eq}) there are few natural judgments that can be stated for list.
More can be done, however, by considering difference lists.

\section{Difference list approximations}\label{dlists}

Since our framework includes $\lambda$-abstractions, it is natural to
represent difference lists as a particular kind of list abstraction
over a list.  For example, in $\lambda$Prolog a difference list is
naturally represented as a $\lambda$-term of the form
$$ \lambda L. \Tok{cons}~x_1~(\Tok{cons}~x_2~(\ldots
   (\Tok{cons}~x_n~L)\ldots)).
$$ 
Such abstracted lists are appealing since the simple
operation of composition encodes the concatenation of two lists.
Given concatenation, it is then easy to encode the judgments of prefix
and suffix.  To see other example of computing on difference lists
described in fashion, see \cite{brisset91iclp}.

\begin{figure*}
$$\begin{array}{c}
{\tt~(traverse~emp~null)}\\
\Allx{\Tok{N}}{\Allx{\Tok{R}}{\Allx{\Tok{S}}{
 {\tt~(traverse~R~S)}\iimp {\tt~(traverse~(bt~N~emp~R)~(cons~N~S))}}}}\\
\Allx{\Tok{N}}{
\Allx{\Tok{M}}{
\Allx{\Tok{R}}{
\Allx{\Tok{S}}{
\Allx{\Tok{L1}}{
\Allx{\Tok{L2}}{
  {\tt~(traverse~(bt~M~L1~(bt~N~L2~R))~S)}
 \iimp
  {\tt~(traverse~(bt~N~(bt~M~L1~L2)~R)~S)}}}}}}}
\end{array}
$$
\caption{Traversing a binary tree to produce a list.}
\label{traverse}
$$\begin{array}{c}
\Allx{W}{\Allx{w}{W w \lequiv W w}}\\
 \Allx{N}{\Allx{R}{\Allx{S}{\Allx{W}{\Allx{w}{
    \llitem{N} \bla (w \bla R~W~w) \lequiv (\llitem{N} \bla (w \bla S~W~w)) \bla 
    \Allx{W}{\Allx{w}{R~W~w \lequiv S~W~w}}}}}}}\\
 \Allx{N}{\Allx{M}{\Allx{L_1}{\Allx{L_2}{\Allx{R}{\Allx{S}{\Allx{W}{\Allx{w}{\\
\qquad\qquad\quad   
  L_1 (\lambda k. item M \bla (k \bla L_2 (\lambda l. item N \bla (l \bla R~W~l)) k)) w \lequiv S~W~w \bla \\
\Allx{W}{\Allx{w}{L_1 (\lambda k. item M \bla (k \bla L_2 (\lambda l. item N \bla (l \bla R~W~l)) k)) w \lequiv S~W~w}}}}}}}}}}
\end{array}
$$
\caption{Linear logic formulas arising from a difference list approximation.}
\label{traverse ll}
\end{figure*}

Lists can be encoded using the difference list notion with the
following mapping into linear logic formulas.
$$\halign{\qquad\hfill $# $ & $\mapsto #$\hfill\cr
  \Tok{nil}  & \lambda L\lambda l.~L~l \cr
  \Tok{cons} & \lambda x\lambda R\lambda L\lambda l.\ 
                     \llitem{x}\bla (l\bla (R~L~l))\cr
}$$
The encoding of the list, say
$(\Tok{cons}~a~(\Tok{cons}~b~\Tok{nil}))$,
is given by the $\lambda$-abstraction 
$$\lambda L\lambda l. \llitem{a}\bla(l\bla 
                     (\llitem{b}\bla (l\bla L~l))).$$

In Figure~\ref{traverse}, a predicate  for traversing a
binary tree is given.  Binary trees are encoded using the type
\Tok{btree} and are constructed using the constructors \Tok{emp}, for
the empty tree, and \Tok{bt} of type
$\Tok{int}\ra\Tok{btree}\ra\Tok{btree}\ra\Tok{btree}$, for building
non-empty trees.  A useful invariant of this program is that the list
of items approximating the binary tree structure in the first
argument of \Tok{traverse} is equal to the list of items in the second
argument.   Linear logic formulas for computing that approximation can
be generated using the following approximating substitution.
$$\halign{\hfill $# $ & $\mapsto #$\hfill\cr
  \Tok{btree}& \Tok{o}\cr
  \Tok{emp}  & \lambda L\lambda l.~L~l \cr
  \Tok{bt}   & \lambda x\lambda R\lambda S\lambda L\lambda l.
               (R~(\lambda l. \llitem{x} \bla (l \bla (S~L~l)))~l)) \cr
}$$
The result of applying that substitution (as well as the one above for
\Tok{nil} and \Tok{cons}) is displayed in Figure~\ref{traverse ll}.
While these formulas appear rather complex, they are all, rather
simple theorems of higher-order linear logic: these 
theorems are essentially trivial since the 
$\lambda$-conversions used to build the formulas from the
data structures has done all the essential work in organizing the
items into a list.  Establishing these formulas
proves that the order and multiplicity of elements in the binary
tree and in the list in a provable traverse computation are the same.

\section{Future work}\label{future}

Various extensions of the basic scheme described here are natural to
consider.  In particular, it should be easy to consider approximating
data structures that contain items of differing types: each of these
types could be mapped into different $\hbox{\sl item}_\alpha(\cdot)$
predicates, one for each type $\alpha$.  

It should also be simple to construct approximating mappings given the
{\em polymorphic} typing of a given constructor's type.  For example,
if we are given the following declaration for binary tree (written
here in $\lambda$Prolog syntax),
\begin{verbatim}
kind btree   type -> type.
type emp     btree A.
type bt      A -> btree A -> btree A -> btree A.
\end{verbatim}
it should be possible to automatically construct the mapping
$$\halign{\hfill $# $ & $\mapsto #$\hfill\cr
  \Tok{btree}& \lambda x.\Tok{o}\cr
  \Tok{emp}  & \bot\cr
  \Tok{bt}   & \lambda x\lambda y\lambda z. \hbox{\sl item}_A(x)\lpar x\lpar y\cr
}$$
that can, for example, approximate a binary tree with the multiset of
the labels for internal nodes.

Abstract interpretation \cite{cousot77popl} can associate to a program
an approximation to its semantics.  Such approximations can help to
determine various kinds of properties of programs.  It will be
interesting to see how well the particular notions of collection
analysis can be related to abstract interpretation.  More challenging
would be to see to what extent the general methodology described here
-- the substitution into proofs (computation traces) and use of linear
logic -- can be related to the very general methodology of abstract
interpretation.

\acks

I am grateful to the anonymous reviewers for their helpful comments on
an earlier draft of this paper.  This work was funded in part by the
Information Society Technologies programme of the European Commission,
Future and Emerging Technologies under the IST-2005-015905 MOBIUS
project.  This paper reflects only the author's views and the
Community is not liable for any use that may be made of the
information contained therein.

\bibliographystyle{abbrv}

\end{document}